\documentclass{article}

\usepackage{microtype}
\usepackage{graphicx}
\usepackage{subfigure}
\usepackage{booktabs} 
\usepackage{subcaption} 
\usepackage{multirow}
\usepackage{makecell}

\newcommand{\zhb}[1]{{\color{black}#1}}

\usepackage{hyperref}
\usepackage{algorithm}
\usepackage{algpseudocode}

\usepackage[preprint]{paper}

\usepackage{amsmath}
\usepackage{amssymb}
\usepackage{mathtools}
\usepackage{amsthm}
\usepackage{balance}

\usepackage[capitalize,noabbrev]{cleveref}

\theoremstyle{plain}
\newtheorem{theorem}{Theorem}[section]

\theoremstyle{definition}

\theoremstyle{remark}

\usepackage[textsize=tiny]{todonotes}

\icmltitlerunning{PipeMax: Enhancing Offline LLM Inference on Commodity GPU Servers}

\begin{document}

\twocolumn[
\icmltitle{PipeMax: Enhancing Offline LLM Inference on Commodity GPU Servers}



\icmlsetsymbol{equal}{*}

\begin{icmlauthorlist}

\icmlauthor{Hongbin Zhang}{sysu}
\icmlauthor{Taosheng Wei}{sysu}
\icmlauthor{Jiazhi Jiang}{sysu}
\icmlauthor{Hui Yan}{sysu}
\icmlauthor{Jiangsu Du}{sysu}
\icmlauthor{Zhiguang Chen}{sysu}

\end{icmlauthorlist}

\icmlaffiliation{sysu}{School of Computer Science and Engineering, Sun Yat-sen University}



\vskip 0.3in
]



\printAffiliationsAndNotice{}  

\begin{abstract}
Offline LLM inference seeks to maximize request processing under fixed budgets, making commodity GPU servers a promising choice.
However, prior work typically considers offloading and parallelism in isolation, resulting in suboptimal performance.
In this paper, we propose PipeMax, a high-throughput LLM inference system that integrates pipeline parallelism with offloading to overcome interconnect and memory constraints on GPU servers.
Particularly, pipeline parallelism naturally incurs low communication overhead and keeps only one batch active on each GPU at a time, which enables offloading the KV cache of inactive batches.
By coordinating computation with offloading data movement, PipeMax effectively expands GPU memory capacity and sustains large-batch execution.
Experiments show that PipeMax achieves up to 2.51× higher throughput than vLLM, and up to 1.42× and 1.38× higher throughput than state-of-the-art high-throughput LLM systems, respectively, on an 8-GPU node.

\end{abstract}

\section{Introduction}
Large language models (LLMs) have been widely adopted across many domains~\cite{copilot,nazi2024largelanguagemodelshealthcare,rane2023contribution}. 
Due to their massive parameter scales, LLM inference requires numerous high-cost GPUs, making it extremely expensive to deploy.
Beyond interactive applications such as chatbots, LLMs are increasingly used in offline workloads, including database processing~\cite{liu2025optimizing} and information extraction~\cite{xu2024large}. Unlike online LLM serving~\cite{zhong2024distserve, agrawal2024taming, du2025ecoserve}, which prioritizes latency SLOs, offline scenarios primarily target high-throughput execution.

Providing high-throughput LLM inference on commodity GPU servers without high-bandwidth interconnects is increasingly important.
First, commodity GPU servers such as RTX~5090 are approximately 3× more cost-effective under the same computational capability~\cite{feng2023mobius} and constitute a large fraction of the deployed GPU infrastructure~\cite{du2025ecoserve}.
Second, high-end GPUs with high-bandwidth interconnects (e.g., NVLink), such as H100 and B200, are in practice preferentially reserved for latency-critical LLM serving.
However, efficiently exploiting commodity GPU servers for high-throughput LLM inference remains challenging.

LLM inference imposes heavy GPU memory demands due to large model weights and extensive KV cache, which often exceeds the footprint of the weights.
Existing systems~\cite{sheng2023flexgen, zhang2025td} attempt to expand memory capacity via either offloading or parallelization.
Purely offloading-based approaches are fundamentally constrained by limited CPU–GPU bandwidth that transferring model weights and KV cache dominates execution time, leaving GPUs largely underutilized.
Parallelism-based approaches include tensor parallelism and pipeline parallelism.
Among them, tensor parallelism is communication-bound due to frequent all-reduce operations, rendering it impractical on bandwidth-limited nodes.
Consequently, pipeline parallelism emerges as the most promising option for commodity GPU servers.

However, standalone pipeline parallelism falls short of realizing its full memory efficiency.
During the decode stage under pipeline parallelism, only the KV cache of a single batch is active for each GPU at any given time, leaving most GPU memory occupied by inactive batches and limiting effective memory expansion.
To address this limitation, we propose PipeMax, a high-throughput LLM inference system tailored for commodity GPU servers.
PipeMax enhances pipeline parallelism with offloading by storing inactive batches in CPU memory, allowing pipeline parallelism to fully exploit its memory efficiency.
In summary, we make the following contributions:

\begin{itemize}
    \vspace{-10pt}
    \item We identify a decode-phase inefficiency in pipeline parallelism: KV cache remains idle across inactive batches, limiting effective GPU memory utilization.
    \vspace{-5pt}
    \item We propose PipeMax, a high-throughput LLM inference system that strategically integrates pipeline parallelism with offloading to expand effective GPU memory by evicting inactive KV cache.
    \vspace{-5pt}
    \item We evaluate PipeMax and demonstrate it has substantial throughput improvements over state-of-the-art systems across a range of model sizes and workloads.
\end{itemize}

\vspace{-15pt}
\section{Background}
\subsection{LLM Processing Phases}
\label{LLM}
\vspace{-2pt}

Recent large language models (LLMs) adopt an autoregressive generation paradigm.
As shown in Fig.~\ref{fig:autoregressive}, the model iteratively predicts tokens until the end-of-sequence (EoS), storing intermediate states as a KV cache to avoid redundant computation.
Accordingly, LLM inference consists of two phases: prefill for processing the input sequence, and decode for sequential token generation using the KV cache.

Prior work~\cite{du2025ecoserve,jiang2025efficient} shows that prefill is compute-intensive and efficient even with small batches, whereas decode is memory-bound and requires large batches for peak utilization.
However, achieving large decode batches requires substantial GPU memory for KV cache. For instance, a batch size of 512 with a sequence length of 1024 in Qwen3-32B requires about 133 GB of KV cache, far beyond a single GPU’s capacity.
Thus, decode throughput is fundamentally limited by GPU memory.

\vspace{-2pt}
\subsection{Existing High-Throughput LLM Inference}
\vspace{-2pt}

To address above GPU memory issue, existing systems~\cite{zhang2025td, kwon2023efficient, su2025seesaw, sheng2023flexgen} commonly use offloading or model parallelism to expand GPU memory.

\vspace{-2pt}
\subsubsection{Offloading Approach}
\vspace{-2pt}
Systems such as FlexGen~\cite{sheng2023flexgen} and TightLLM~\cite{hu2025tightllm} enable single-GPU high-throughput LLM inference by offloading model weights and/or KV cache to CPU memory.
As illustrated in Fig.~\ref{fig:offload-tech}, they rely on two core techniques: layer-wise offloading, which keeps only a  subset of layers on the GPU, and multi-batch inference, which partitions requests into multiple batches to reduce the KV cache footprint.
Both techniques use prefetching to overlap computation and data transfer.

However, offloading-only approaches remain bandwidth-bound on commodity GPUs: limited PCIe bandwidth causes weight transfers to dominate execution, and on multi-GPU nodes, repeatedly transferring weights to each GPU leads to redundant communication and wasted bandwidth.

\vspace{-2pt}
\subsubsection{Model Parallelism Approach}
\vspace{-2pt}

The two mainstream model parallelism techniques are tensor parallelism and pipeline parallelism, both of which distribute model parameters across multiple GPUs to expand total memory capacity.
As shown in Fig.~\ref{fig:model-parallelism}, tensor parallelism partitions computation within each layer and requires all-reduce synchronization at every layer, whereas pipeline parallelism assigns different layer groups to different GPUs and transfers intermediate activations between stages.

\begin{figure}[!t]
    \centering    
    \includegraphics[width=0.7\linewidth]{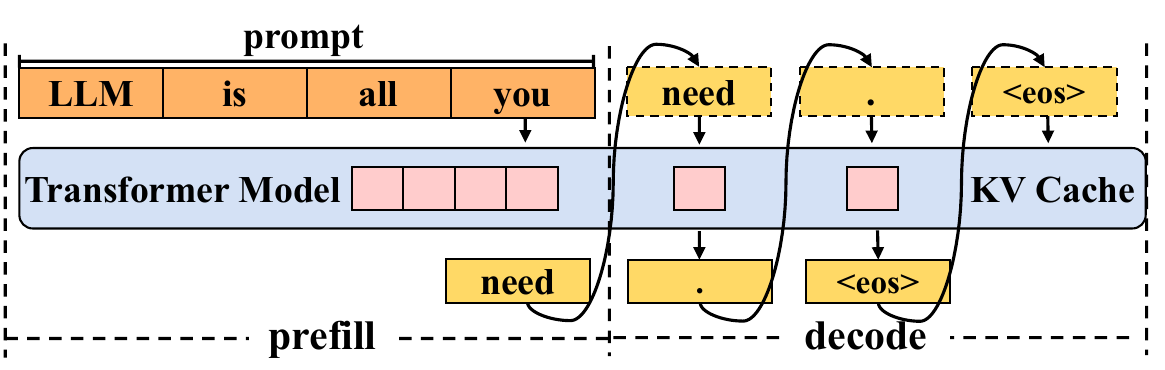}
    \vspace{-3pt}
    \caption{Autoregressive generation in LLM inference.}
    \label{fig:autoregressive}
    \vspace{-10pt}
\end{figure}

\begin{figure}[!t]
    \centering    \includegraphics[width=0.9\linewidth]{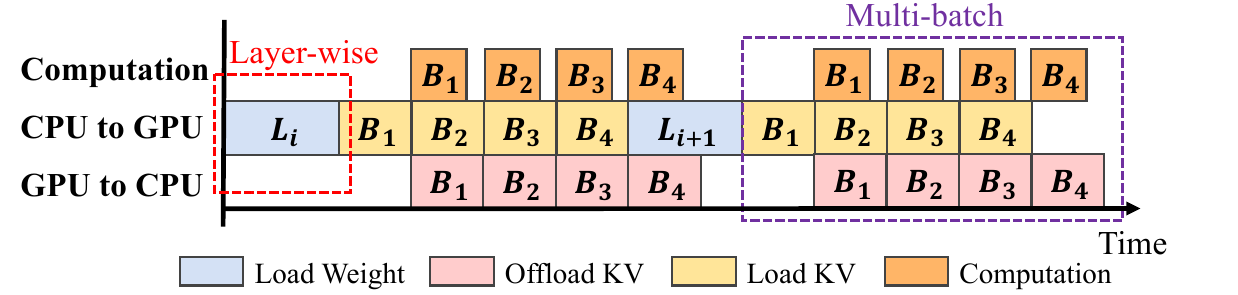}
    \vspace{-2pt}
    \caption{Existing offload-based LLM inference method.}
    \label{fig:offload-tech}
    \vspace{-15pt}
\end{figure}
\begin{figure}[!b]
    \vspace{-14pt}
    \centering    
    \includegraphics[width=0.88\linewidth]{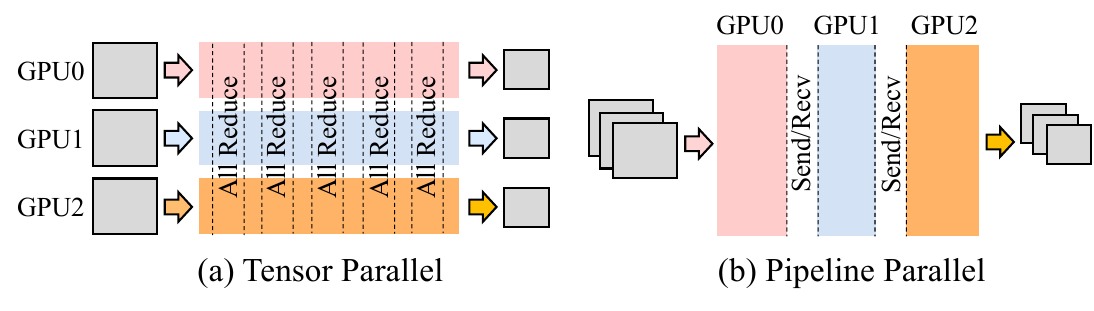}
    \vspace{-6pt}
    \caption{Tensor parallelism and pipeline parallelism.}
    \vspace{-2pt}
    \label{fig:model-parallelism}
\end{figure}

Prior work~\cite{zhang2025td,su2025seesaw} shows that tensor parallelism incurs high communication overhead on GPU servers without high-speed interconnects due to frequent all-reduce operations, thereby making pipeline parallelism a natural alternative.
Motivated by this, TD-Pipe~\cite{zhang2025td} adopts temporally disaggregated pipeline parallelism with inter-batch work stealing to mitigate pipeline bubbles and improve decode-phase arithmetic intensity.
Meanwhile, Seesaw~\cite{su2025seesaw} re-shards the model across prefill and decode, using pipeline parallelism for prefill and tensor parallelism for decode.
However, both approaches have limitations.
TD-Pipe retains multiple decode batches in GPU memory, leaving much KV cache inactive (Section~\ref{decode-phase}),
while Seesaw assumes negligible decode-phase communication overhead, which is often non-negligible on bandwidth-constrained commodity servers, and becomes increasingly restrictive as the system scales to more GPUs.

\vspace{-2pt}
\subsection{Analysis of Pipeline Parallel for LLM Inference}
\vspace{-2pt}
\label{pp-analyse}
\subsubsection{Prefill phase}
\label{pp-analyse-prefill}
\vspace{-2pt}
Fig.~\ref{fig:pure-prefill} shows that under continuous pipeline-parallel execution, prefill latency is dominated by the first pipeline stage for large request counts.
The total execution time equals the first-stage time plus $(n-1)$ times the per-stage prefill time of the longest request, where $n$ is the number of GPUs (Appendix~\ref{app:prefill-proof}).
As prefill is compute-intensive and free of inter-request dependencies, pipeline parallelism achieves high GPU utilization with minimal overhead.
\vspace{-2pt}
\subsubsection{Decode phase}
\vspace{-2pt}
\label{decode-phase}

In contrast, decode exhibits fundamentally different behavior under pipeline parallelism.
Prior work~\cite{zhang2025td, zhong2024distserve} partitions GPU-resident requests into multiple autoregressive batches to fill the pipeline, but inter-step dependencies cause execution-time variations to amplify into inter-batch imbalance, leading to pipeline stalls (Fig.~\ref{fig:decode-pp}(a)).
TD-Pipe~\cite{zhang2025td} mitigates this issue via inter-batch work stealing (Fig.~\ref{fig:decode-pp}(b)).
\begin{figure}[!t]
    \vspace{-2pt}
    \centering    \includegraphics[width=0.9\linewidth]{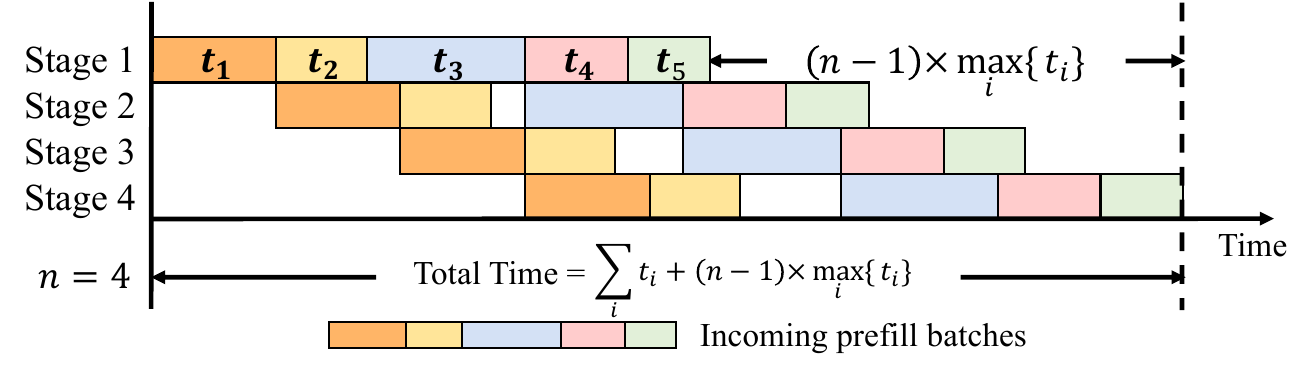}
    \vspace{-2pt}
    \caption{Prefill phase under pipeline parallelism. The total execution time equals the first stage plus $(n - 1)$ times the longest stage, where $n$ is the number of GPUs.}
    \vspace{-12pt}
    \label{fig:pure-prefill}
\end{figure}

Despite such balancing, decode throughput under pipeline parallelism remains constrained by the memory footprint of the \emph{active} batch.
At any time, each GPU executes only one active decode batch, leaving the KV cache of other batches idle and resulting in low effective GPU memory utilization.
The average token budget per batch can be expressed as
\vspace{-3pt}
\begin{equation}
\frac{M - W/n}{T},
\label{equation: memory}
\end{equation}

\vspace{-5pt}
where $M$ denotes the per-GPU memory capacity, $W$ the total model size, $n$ the pipeline degree, and $T$ the KV cache size per token; a detailed derivation is provided in Appendix~\ref{app:decode-proof}.
As a result, although pipeline parallelism increases the \emph{aggregate} KV cache capacity across devices, the KV cache available to the \emph{active} batch remains fundamentally constrained.
Importantly, multi-batch execution exhibits natural \emph{anti-locality}: a batch’s KV cache remains unused until all other batches finish their decode iterations.

\vspace{-2pt}
\subsubsection{Prefill-Decode imbalance}
\vspace{-2pt}

\begin{figure}[!b]
    \vspace{-10pt}
    \centering    \includegraphics[width=0.99\linewidth]{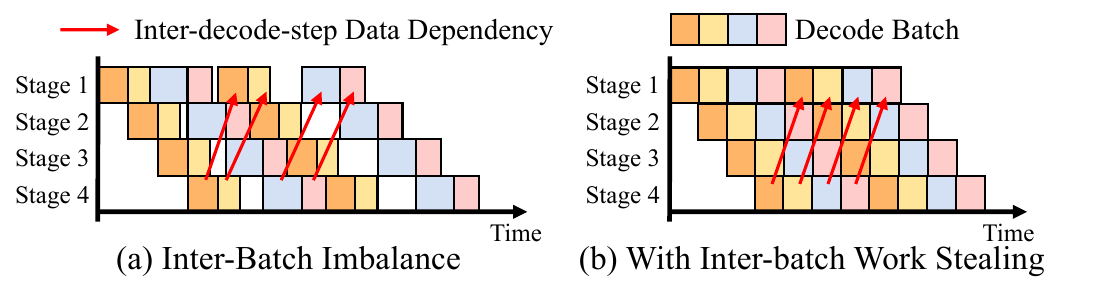}
    \vspace{-8pt}
    \caption{Decode phase under pipeline parallelism. }
    \label{fig:decode-pp}
    \vspace{-10pt}
\end{figure}
\begin{figure}[!t]
    \vspace{-2pt}
    \centering    \includegraphics[width=0.85\linewidth]{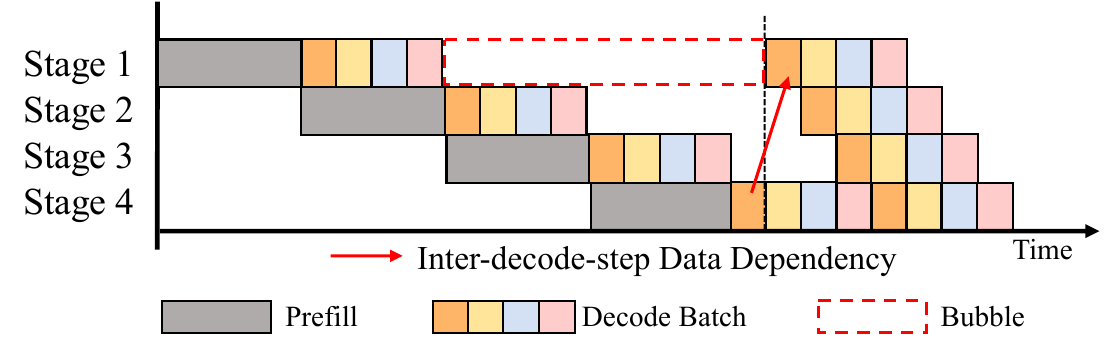}
    \vspace{-2pt}
    \caption{Prefill-decode imbalance. }
    \vspace{-15pt}
    \label{fig:prefill-decode}
\end{figure}
Frequent transitions between prefill and decode introduce pipeline bubbles due to execution-time mismatch, as illustrated in Fig.~\ref{fig:prefill-decode}.
TD-Pipe~\cite{zhang2025td} mitigates this overhead through temporal separation, which PipeMax adopts as its baseline execution model.
\vspace{-2pt}
\subsection{Opportunities and Challenges}
\label{opportunities}
\vspace{-2pt}
The anti-locality in Section~\ref{decode-phase}, together with pipeline parallelism’s multi-batch structure, naturally motivates offloading inactive KV cache to CPU memory and prefetching it on demand, thereby expanding the effective GPU memory for the active batch.
Fig.~\ref{fig:decode-prefetch} illustrates this mechanism.
However, realizing this opportunity introduces three challenges:

\vspace{-12pt}
\paragraph{C1: Uncertain Compute–Prefetch Overlap.} 
Achieving effective KV cache prefetching requires overlapping prefetch with decode computation, so that GPU memory only needs to hold the KV cache of the current and upcoming batches.
However, this ideal overlap is not always attainable in practice.
Limited CPU–GPU bandwidth may prevent fully prefetching a batch’s KV cache within a single decode iteration, delaying subsequent iterations and causing GPU idle time.
Moreover, decode execution time varies across iterations due to irregular request completion and dynamic batch composition, further complicating compute–communication overlap.

\vspace{-12pt}
\paragraph{C2: Prefetch-Aware Decode Batch Balancing.}
As discussed in Section~\ref{decode-phase}, efficient decode execution requires balanced execution times across batches.
TD-Pipe achieves this via inter-batch work stealing under the assumption of a closed set of GPU-resident requests.
With prefetching, however, decode batches are formed from a broader pool of CPU-resident requests, invalidating this assumption.
Consequently, batch balancing must jointly consider execution time and prefetch feasibility, making it significantly harder to maintain stable execution across iterations.

\vspace{-12pt}
\paragraph{C3: KV Cache Transfer Inefficiency under PagedAttention.}
Modern LLM inference frameworks adopt PagedAttention~\cite{kwon2023efficient} to improve GPU memory utilization, yet mainstream implementations are not transfer-friendly.
In systems such as vLLM~\cite{kwon2023efficient} and sglang~\cite{zheng2024sglang}, KV cache is organized in page-sized blocks and further separated along the layer dimension, fragmenting each request’s KV cache along two dimensions and leading to inefficient CPU–GPU transfers.

\vspace{-8pt}
\section{The Design of PipeMax}
\vspace{-2pt}
In this section, we propose PipeMax, a high-throughput LLM inference system that leverages pipeline parallelism with offloading.
Specifically, PipeMax leverages pipeline parallelism to partition model weights across GPUs, while offloading only KV cache to CPU memory.

\vspace{-5pt}
\subsection{PipeMax Workflow}
\vspace{-2pt}

As illustrated in Fig.~\ref{fig:pipemax-workflow}, PipeMax temporally decouples prefill and decode, establishing a producer--consumer pipeline between the two stages.
\begin{figure}[!t]
    \centering    \includegraphics[width=0.8\linewidth]{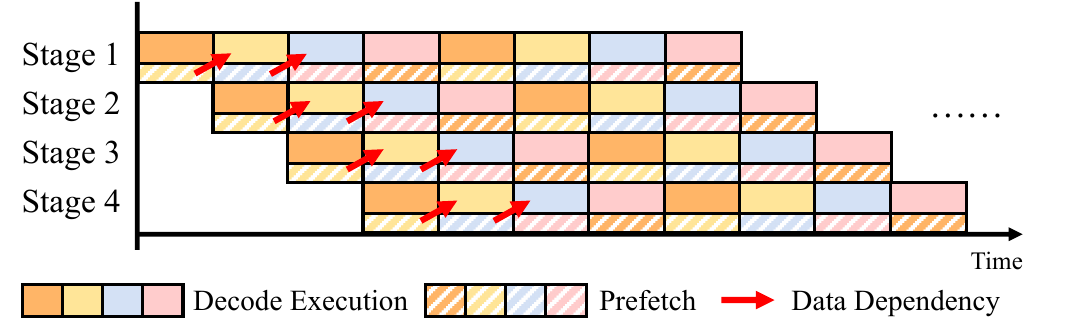}
    \vspace{-5pt}
    \caption{Decode phase with KV cache prefetching.}
    \label{fig:decode-prefetch}
    \vspace{-17pt}
\end{figure}
\begin{figure}[!b]
    \vspace{-20pt}
    \centering
    \includegraphics[width=0.87\linewidth]{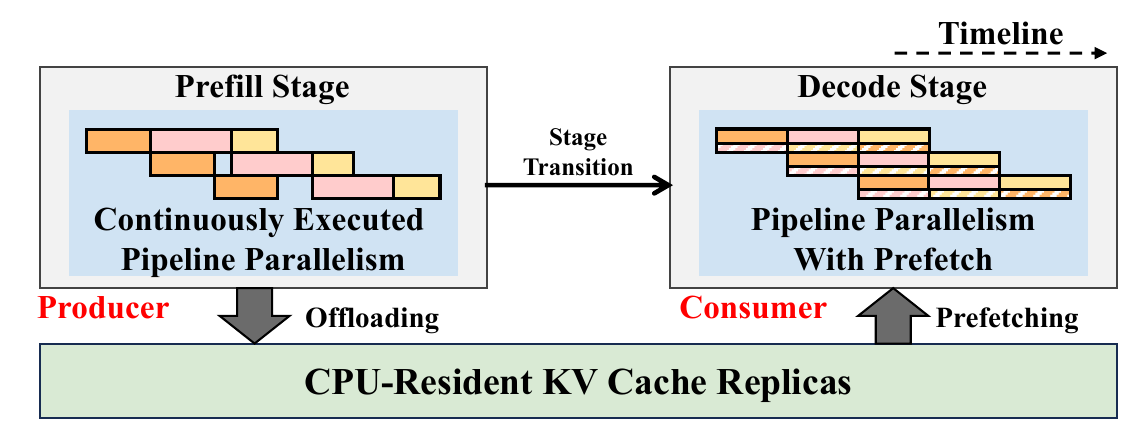}
    \vspace{-4pt}
    \caption{The workflow of PipeMax.}
    \label{fig:pipemax-workflow}
\end{figure}

\vspace{-3pt}
\subsubsection{Prefill Stage}
\vspace{-2pt}
As discussed in Section~\ref{pp-analyse}, PipeMax continuously performs prefill to maximize pipeline utilization.
The generated KV cache is asynchronously offloaded to CPU memory, allowing GPU memory to be reused by overwriting completed requests without blocking.
This decouples prefill from GPU memory constraints and accumulates decode requests.

\vspace{-2pt}
\subsubsection{Decode Stage}
\label{workflow-decode}
\vspace{-1pt}

Once sufficient requests are buffered, PipeMax enters the decode stage and addresses the challenges identified above.

\vspace{-1pt}
\textbf{C1.} PipeMax adopts a best-effort KV cache prefetching strategy.
Before each iteration, it estimates decode execution time to determine the prefetch budget and incrementally prefetches KV cache by overwriting inactive batches.
When bandwidth is insufficient, prefetching is amortized across iterations, resulting in a hybrid GPU-resident and prefetched KV cache per batch (Fig.~\ref{fig:kvcache-composition}).

\vspace{-1pt}
\textbf{C2.} PipeMax employs a prefetch-aware scheduler that jointly considers execution time and prefetch feasibility, aligning batch execution times to mitigate inter-batch imbalance.

\vspace{-1pt}
\textbf{C3.} PipeMax further introduces a transfer-efficient KV cache engine that maximizes CPU--GPU bandwidth utilization while remaining compatible with PagedAttention.
\vspace{-3pt}
\subsubsection{Prefill--Decode Switching Policy}
\vspace{-1pt}
Unlike TD-Pipe, which frequently switches between prefill and decode to sustain compute intensity under GPU memory constraints, PipeMax buffers a large pool of decode-ready requests in CPU memory and adopts a memory-driven switching policy.

Prefill proceeds until CPU-resident KV cache approaches capacity, reserving headroom for decode, and resumes only when available KV cache falls below GPU capacity. As a result, prefill--decode switching is infrequent, and the imbalance in Fig.~\ref{fig:prefill-decode} can be safely ignored.

\vspace{-3pt}
\subsection{System Overview}
\vspace{-2pt}

As shown in Fig.~\ref{fig:pipemax-overview}, PipeMax adopts a hierarchical controller architecture that separates control from execution.
A centralized engine serves as the control plane, while a distributed runtime constitutes the execution plane, jointly supporting the execution workflow described above.
We next describe the key mechanisms of both components.

\begin{figure}[!t]
    \centering    \includegraphics[width=0.75\linewidth]{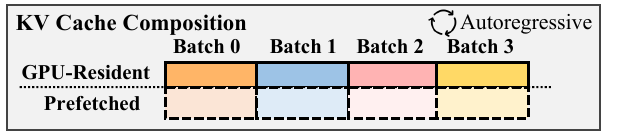}
    \vspace{-2pt}
    \caption{KV cache composition per batch in PipeMax under bandwidth constraints.}
    \vspace{-16pt}
    \label{fig:kvcache-composition}
\end{figure}
\begin{figure}[!b] 
\vspace{-14pt} 
\centering 
\includegraphics[width=0.9\linewidth]{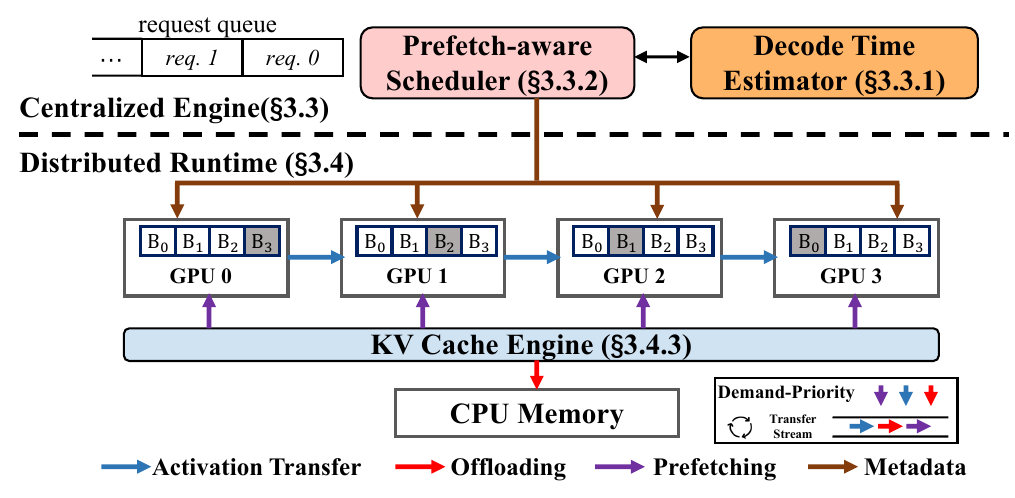} 
\vspace{-6pt} 
\caption{Overview of PipeMax system.}
\label{fig:pipemax-overview} 
\end{figure}

\vspace{-4pt}
\subsection{Centralized Engine}
\vspace{-2pt}
The centralized engine integrates a decode execution-time estimator
and a prefetch-aware scheduler to coordinate decode execution and KV cache prefetching.

\vspace{-2pt}
\subsubsection{Decode Execution Time Estimator}
\label{Estimator}
\vspace{-2pt}

To overlap decode computation with KV cache prefetching, PipeMax estimates decode execution time for each batch.

A decode iteration consists of two components: (1) linear operations (e.g., QKV projection and feed-forward networks) with cost $O(b \times h^2)$ for batch size $b$ and hidden size $h$; and (2) attention over the prefix KV cache with cost $O\!\left(L\right)$, where $L=\sum_{i=1}^{b} L_i$ denotes the total prefix length of the batch, and $L_i$ is the prefix length of request $i$.

Accordingly, PipeMax models the decode execution time of a batch as:
\begin{equation}
\label{model}
\alpha \cdot b + \beta \cdot L + \delta,
\end{equation}
where $\alpha$, $\beta$, and $\delta$ are parameters obtained via offline profiling,
capturing the per-request linear cost, per-token attention cost, and constant overheads, respectively.

\vspace{-2pt}
\subsubsection{Prefetch-aware decode scheduler}
\label{scheduler}
\vspace{-2pt}

To address C2 in Section~\ref{opportunities}, PipeMax introduces a prefetch-aware scheduler for the decode stage that dynamically determines how many requests to prefetch and which ones to select.
The scheduler aims to fully overlap computation with KV cache prefetching, while expanding effective memory capacity and maintaining balance across batches.

As discussed in Section~\ref{workflow-decode}, limited CPU--GPU bandwidth may leave portions of KV cache from multiple batches resident in GPU memory.
To efficiently utilize GPU memory while preserving autoregressive semantics, PipeMax partitions decode requests into $n$ batches, where $n$ is the pipeline depth, and formulates decode scheduling as a stateful, prefetch-aware iterative batch update problem.

PipeMax maintains a set of decode batches
\vspace{-3pt}
\begin{equation}
\mathcal{D} = \{ D_0, D_1, \ldots, D_{n-1} \},
\vspace{-5pt}
\end{equation}
which are executed autoregressively in a cyclic order.
At iteration $t$, the executing, prefetched, and overwritten batches are indexed as
\vspace{-6pt}
\begin{equation}
\begin{aligned}
i_t &= t \bmod n, \\
j_t &= (i_t + 1) \bmod n, \\
k_t &= (i_t - 1) \bmod n . 
\end{aligned}
\end{equation}

\vspace{-8pt}
During iteration $t$, PipeMax executes batch $D_{i_t}$ while concurrently prefetching KV cache from CPU
memory to update $D_{j_t}$ for execution in iteration $t+1$, overlapping computation with data movement.
\vspace{-5pt}
\paragraph{Initial Decode Batches.}
Let $\mathcal{R}$ denote the set of requests with KV cache resident in GPU memory after prefill.
PipeMax constructs an initial partition $\mathcal{D}$
such that
\vspace{-4pt}
\begin{equation}
\bigcup_{k=0}^{n-1} D_k = \mathcal{R}, \quad D_k \cap D_{k'} = \emptyset \;\; \forall k \neq k'.
\vspace{-3pt}
\end{equation}
Each subset $D_k$ forms an initial decode batch.
The initial partition divides requests into batches of equal size, while attempting to balance total KV cache length across batches to approximate similar decode execution times, thereby reducing inter-batch imbalance across pipeline stages.
\vspace{-5pt}
\paragraph{Iterative Prefetch-Aware Scheduling.}

After initialization, decode execution proceeds iteratively in an autoregressive manner.
At each iteration $t$, PipeMax updates the next decode batch $D_{j_t}$ using a prefetch-aware scheduling policy.

First, PipeMax retains requests whose KV cache remains resident in GPU memory:

\vspace{-8pt}
\begin{equation}
D_{j_t}^{\mathrm{res}} = D_{j_t} \cap \mathcal{R}_t ,
\vspace{-1pt}
\end{equation}
where $\mathcal{R}_t$ denotes the set of requests whose KV cache resides in GPU memory at the beginning of iteration $t$.

Second, PipeMax determines additional requests to prefetch from CPU memory.
It predicts the execution time $\hat{T_t}$ of the currently executing batch $D{i_t}$ using the estimator in Section~\ref{Estimator}, and derives a prefetch budget
\begin{equation}
\vspace{-3pt}
\label{eq:budget}
\mathcal{B}_t = B \cdot \hat{T}_t,
\vspace{-1pt}
\end{equation}
where $B$ denotes the effective CPU–GPU bandwidth, profiled using KV cache transfers in the PagedAttention format.

Given $\mathcal{B}_t$, PipeMax selects CPU-resident requests $P_t$
whose total prefix length approaches $\mathcal{B}_t$,
maximizing bandwidth utilization and effective GPU memory expansion.

While the above formulation specifies $P_t$ per iteration,
the scheduling policy evolves across iterations,
and the decode execution is divided into warm-up and steady phases.\\
\emph{\textbf{In warm-up phase}}, PipeMax enlarges the prefetch budget across iterations by extending decode iterations, thereby expanding effective GPU memory.
According to Eq.~\ref{model}, when the total KV cache length of the current batch $L$ is close to $\mathrm{len}(D_{j_t}^{\mathrm{res}}) + \mathcal{B}_t$, execution time is dominated by the batch size~$b$.
PipeMax therefore prioritizes short requests to pack more requests into each batch,
maximizing the batch size~$b$ under a fixed KV cache budget.
This increases the prefetch budget in subsequent iterations via the time-based update, forming a positive feedback loop.
However, as longer requests are admitted, KV cache growth under fixed GPU memory and CPU--GPU bandwidth imposes hard limits, causing execution time to fluctuate.
In practice, after such temporary fluctuations, execution time converges to a bounded range, after which the system enters the steady phase.\\
\emph{\textbf{In steady phase}}, the primary objective shifts to maintaining execution-time balance across decode batches to mitigate inter-batch imbalance.
To this end, PipeMax selects $P_t$ such that the predicted execution time of the updated batch remains close to $\hat{T}_t$.
Based on the execution-time model in Section~\ref{Estimator}, the retained set $D_{j_t}^{\mathrm{res}}$ contributes a deterministic execution time
$\hat{T}^{\mathrm{res}}_t$, computed using Eq.~(\ref{model}).
The remaining execution-time gap is

\vspace{-8pt}
\begin{equation}
\label{cpu-time}
\Delta \hat{T}_t = \hat{T}_t - \hat{T}^{\mathrm{res}}_t .
\end{equation}
\vspace{-17pt}

Selecting $P_t$ is formulated as a subset selection problem, over CPU-resident requests, where each request $r$ contributes $\alpha + \beta L_r$ according to Eq.~(\ref{model}). The cumulative prefix length is constrained to nearly saturate the prefetch budget $\mathcal{B}_t$, so as to fully utilize the available CPU--GPU bandwidth and maximally extend the effective GPU memory capacity. 
The objective is to make the execution-time contribution of CPU-resident requests as close as possible to $\Delta \hat{T}_t$. 
PipeMax provides a greedy algorithm to efficiently solve this problem; the complete scheduling procedure and the detailed algorithm for selecting $P_t$ are presented in Appendix~\ref{app:scheduler}.

During prefetching for $D_{j_t}$, PipeMax reuses GPU memory by overwriting KV cache blocks associated with the inactive batch $D_{k_t}$.
The updated decode batch is
\begin{equation}
D_{j_t} = D_{j_t}^{\mathrm{res}} \cup P_t.
\end{equation}
\vspace{-25pt}
\subsection{PipeMax Runtime}
The PipeMax runtime consists of two components to support model execution, as follows.
\vspace{-3pt}
\subsubsection{Scheduler-coordinated Model Executor}
\vspace{-3pt}
The PipeMax runtime includes a scheduler-coordinated model executor that receives execution metadata from the centralized engine and executes scheduled model computation,
asynchronously transferring intermediate activations across pipeline stages via peer-to-peer communication.

\vspace{-2pt}
\subsubsection{Transfer-Efficient KV cache Engine}
\vspace{-2pt}
PipeMax designs a transfer-efficient KV cache engine to efficiently utilize available CPU--GPU bandwidth. 
Similar to existing systems such as vLLM~\cite{kwon2023efficient} and sglang~\cite{zheng2024sglang}, PipeMax stores KV cache on both GPU and CPU memory using the PagedAttention format, and further enhances KV cache transfers as follows:

\begin{figure}[!t]
    \hspace{0.5em}
    \vspace{-8pt}
    \includegraphics[width=\linewidth]{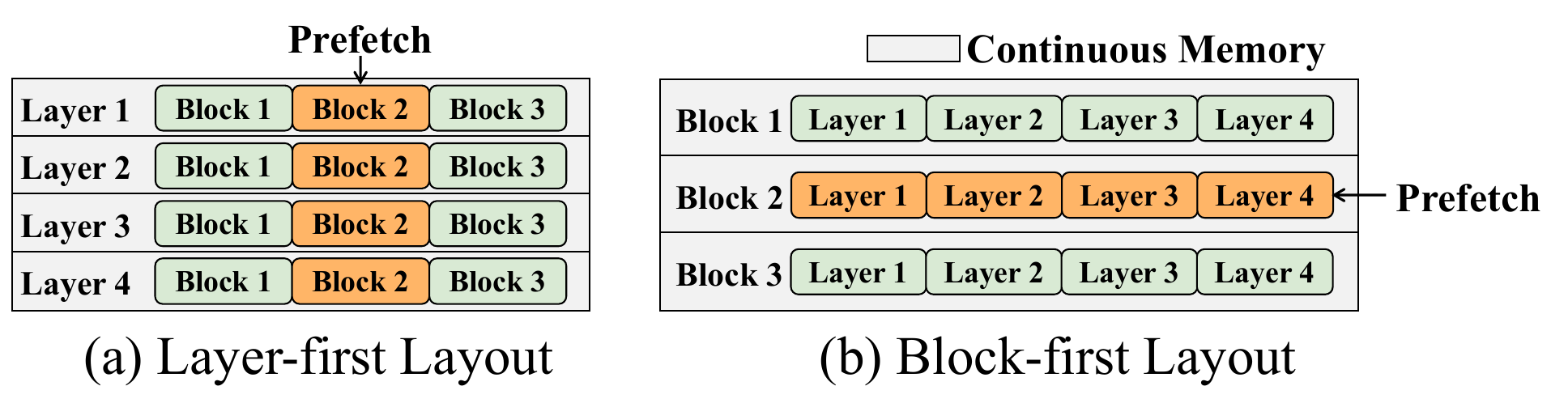}
    \vspace{-8pt}
    \caption{Layer-first vs block-first KV cache layouts for prefetching. The block-first layout enables contiguous block-level prefetching, while the layer-first layout incurs fragmented transfers.}
    \label{fig:layout-design}
    \vspace{-13pt}
\end{figure}

\vspace{-8pt}
\paragraph{Block-First Layout for Prefetching}
\textbf{C3} in Section~\ref{opportunities} identifies that existing PagedAttention-based designs adopt layer-first KV cache layouts and further partition them into blocks, causing KV cache prefetching to be sliced along both the layer and block dimensions.
As illustrated in Fig.~\ref{fig:layout-design}(a), when prefetching a block of a target request, the corresponding KV cache is not stored contiguously in memory, leading to inefficient CPU--GPU data transfers.

PipeMax instead adopts a block-first KV cache layout to accelerate prefetching.
Fig.~\ref{fig:layout-design}(b) shows that this layout colocates the KV cache of all layers within the same block into contiguous memory regions.
Thus, KV cache prefetching is sliced only along the block dimension, enabling efficient transfers.
\zhb{Notably, this layout change only affects KV cache storage and transfer, and remains fully compatible with PagedAttention without any implementation changes.}

\vspace{-8pt}
\paragraph{Asynchronous CPU-Assisted KV Cache Offloading}
\label{async-offload}
\begin{figure}[!b]
    \vspace{-10pt}
    \centering    \includegraphics[width=0.9\linewidth]{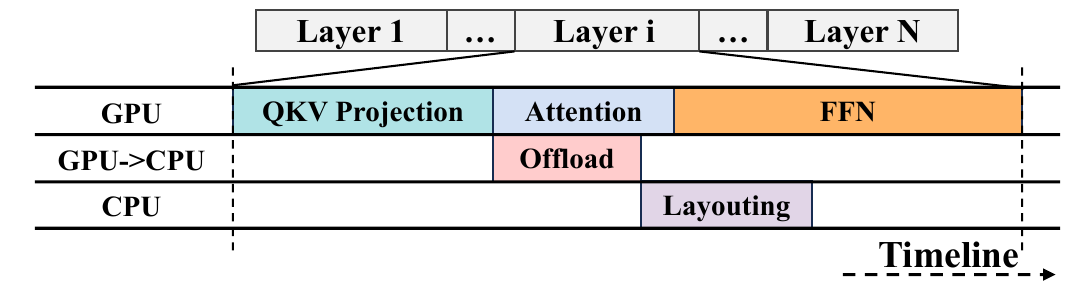}
    \vspace{-8pt}
    \caption{Execution timeline of per-layer KV cache generation, offloading, and CPU-side layouting.}
    \vspace{-8pt}
    \label{fig:async-kv-replication}
\end{figure}
Both  prefill and decode stages continuously generate KV cache that must be preserved in CPU-resident replicas, and ideally transferred in a way that can be overlapped with computation.
While the block-first layout improves prefetch efficiency, KV cache is generated on a per-layer basis, making block-wise offloading infeasible at generation time. 

Fortunately, right after QKV projection, each layer produces contiguous KV cache tensors before they are partitioned into PagedAttention blocks.
PipeMax exploits this observation by asynchronously offloading per-layer KV cache to CPU memory immediately after QKV computation, overlapping data transfer with subsequent computation.
In Fig.~\ref{fig:async-kv-replication}, the offloaded KV cache is then reorganized on the CPU side into a block-first layout for later prefetching, incurring negligible overhead.
Since PipeMax adopts a consistent block-first KV cache layout in both CPU and GPU memory, prefetched KV cache blocks can be directly transferred from CPU to GPU without any further reorganization.

\begin{figure}[!t]
    \vspace{-4pt}
    \centering    \includegraphics[width=0.86\linewidth]{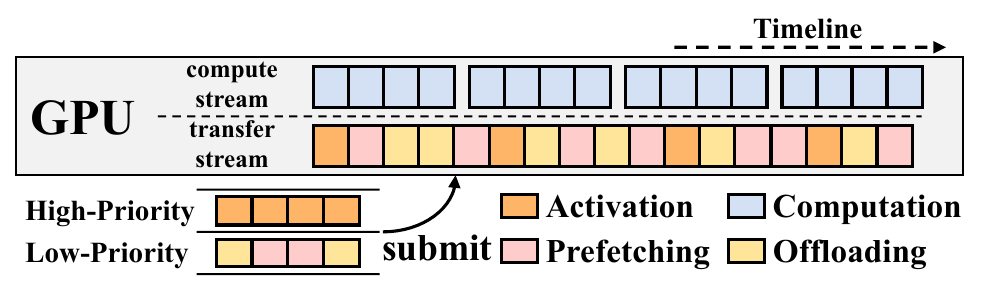}
    \vspace{-6pt}
    \caption{Priority-based transfer orchestration in PipeMax.}
    \label{fig:priority}
    \vspace{-12pt}
\end{figure}

\vspace{-11pt}
\paragraph{Demand-Priority Transfer Orchestration}

The PipeMax runtime involves three types of data movement: inter-stage activation transfers, KV cache prefetching, and KV cache offloading.
On commodity GPU servers, all transfers are carried out over PCIe, a full-duplex CPU--GPU interconnect, where prefetching and offloading proceed in opposite directions without interfering with each other.
Activation transfers occupy both PCIe directions and lie on the critical path of the next pipeline stage, whereas prefetching and offloading use only the CPU-to-GPU and GPU-to-CPU directions, respectively.
Although activation transfers involve much smaller data volumes and incur negligible impact on prefetching and offloading, they are latency-critical for pipeline execution and highly sensitive to interference from concurrent prefetching and offloading.

Inspired by priority-aware load management in AptMoE~\cite{wei2024aptmoe}, PipeMax employs priority-based transfer orchestration with multiple priority queues, prioritizing activation transfers without degrading KV cache throughput.
Fig.~\ref{fig:priority} shows that PipeMax maintains priority queues for issuing data transfer requests.
PipeMax enforces priorities by controlling the submission order of PCIe transfer requests: activation transfers are issued promptly, while KV cache prefetching and offloading are scheduled opportunistically.

\vspace{-12pt}
\section{Evaluation}
\vspace{-2pt}
We implement PipeMax based on vLLM v0.7.3.
To demonstrate its effectiveness, we evaluate PipeMax across diverse hardware configurations and workloads, and compare it with state-of-the-art approaches.
In addition, we conduct ablation studies to quantify the impact of individual components.
\vspace{-5pt}
\subsection{Experimental Setup}
\subsubsection{Node Testbed.}
\vspace{-2pt}

We conduct experiments on three nodes: two commodity GPU servers with 8×RTX~5090 and 8×L20 GPUs (both without NVLink), and one data-center server with 8×H100 GPUs interconnected via NVLink\footnote{H100 is included solely as a high-end reference platform, while our primary focus is on commodity GPU servers.}.
All GPUs connect to independent PCIe root complexes: RTX~5090 and H100 use PCIe~5.0 (64~GB/s), while L20 uses PCIe~4.0 (32~GB/s).

\vspace{-3pt}
\subsubsection{Model and Dataset Setup.}
\label{setup:workload}
\vspace{-3pt}
Table~\ref{tab:testbed-models} summarizes the deployed dense and MoE models, each scaled to the limits of its GPU platform.

\begin{table}[!t]
\centering
\caption{GPU platforms and deployed models.}
\vspace{-7pt}
\resizebox{0.83\linewidth}{!}{%
\begin{tabular}{cc}
\toprule
\textbf{Node} & \textbf{Workload (Models)} \\
\midrule
8 $\times$ RTX~5090 & LLaMA~2~70B (\textbf{70B}), Mixtral-8$\times$7B(\textbf{8$\times$7B})\\
8 $\times$ L20     &  LLaMA~2~70B (\textbf{70B}), Mixtral-8$\times$7B(\textbf{8$\times$7B})\\
8 $\times$ H100    & Qwen3~235B-A22B (\textbf{235B-A22B}) \\
\bottomrule
\end{tabular}
}
\label{tab:testbed-models}
\vspace{-15pt}
\end{table}

We select two representative datasets with distinct sequence-length characteristics.
\textbf{ShareGPT} features balanced input and output lengths, while
\textbf{LongBench} targets long-form inputs.
Table~\ref{tab:dataset-stats} reports their input and output length statistics.

\vspace{-3pt}
\subsubsection{Baseline setup.}
\vspace{-3pt}
\begin{figure}[!b]
    \vspace{-16pt}
    \centering    \includegraphics[width=0.985\linewidth]{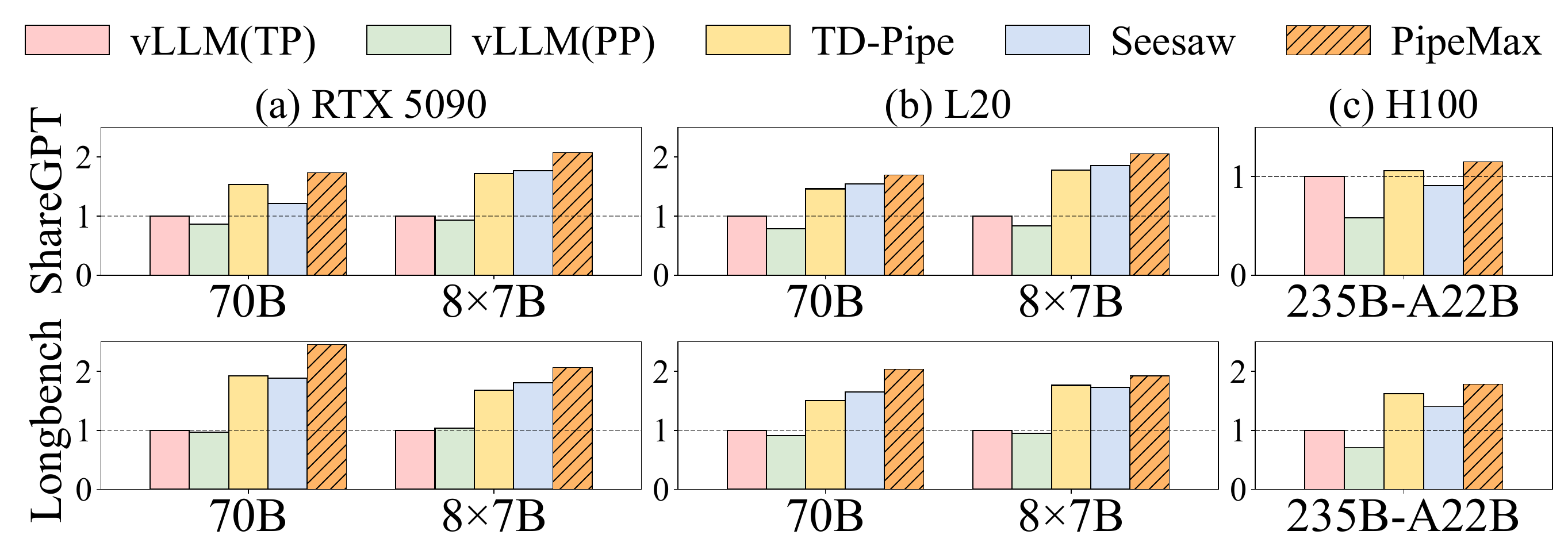}
    \vspace{-4pt}
    \caption{Normalized overall throughput (tokens/s) across workloads and GPU servers, where vLLM (TP) is normalized to 1.}
    \label{fig:overall}
\end{figure}
We compare PipeMax with the following baselines. Offloading-based systems such as FlexGen and TightLLM primarily target single-GPU inference and are not designed for multi-GPU execution, and thus are not included.

\vspace{-1pt}
\textbf{vLLM}~\cite{kwon2023efficient} is a widely adopted LLM inference engine supporting multiple parallelism strategies.

\vspace{-2pt}
\textbf{TD-Pipe}~\cite{zhang2025td} extends vLLM by addressing fundamental inefficiencies in pipeline parallelism, temporally decoupling prefill and decode to mitigate prefill–decode imbalance (Fig.~\ref{fig:prefill-decode}) and employing inter-batch work stealing to alleviate inter-batch imbalance (Fig.~\ref{fig:decode-pp}).

\vspace{-2pt}
\textbf{Seesaw}~\cite{su2025seesaw} builds upon vLLM by adopting pipeline parallelism during the prefill stage and prioritizing tensor parallelism during the decode stage.

We evaluate vLLM under both tensor parallelism(TP) and  pipeline parallelism(PP). 
TD-Pipe uses PP, while Seesaw follows its design, using PP for prefill and TP for decode.

All baselines are implemented on top of vLLM v0.7.3 to ensure a fair comparison.
Since the designs of Seesaw, TD-Pipe, and PipeMax are orthogonal to vLLM’s ongoing evolution, the relative performance trends reported in this paper are expected to remain valid on newer vLLM versions.

\vspace{-3pt}
\subsection{Overall Throughput}
\vspace{-3pt}

We compare PipeMax with all baselines in throughput, measured in tokens per second, across diverse hardware configurations and workloads (Section~\ref{setup:workload}).
Fig.~\ref{fig:overall} reports the normalized overall throughput results.
PipeMax outperforms vLLM(TP), vLLM(PP), TD-Pipe, and Seesaw by up to 2.45×, 2.51×, 1.42×, and 1.38×, respectively.

PipeMax leverages pipeline parallelism to reduce inter-GPU communication overhead compared to vLLM(TP), while eliminating the decode-phase anti-locality in prior pipeline-based designs such as TD-Pipe.
Although vLLM (PP) also adopts pipeline parallelism, its decode phase suffers from pronounced inter-batch imbalance(Fig.~\ref{fig:decode-pp}) due to lack of load balancing, which limits overall performance.
Seesaw shows lower-than-expected performance in our setting, as its decode phase relies on all-reduce communication, constrained by bandwidth in large-scale GPU configurations.
Moreover, even on NVLink-equipped H100 servers, the extreme compute capability of H100 GPUs makes inter-GPU communication under tensor parallelism non-trivial.
Consequently, PipeMax has the potential to deliver benefits on data-center GPU servers as well.

\vspace{-6pt}
\subsection{Ablation Study}
\vspace{-5pt}
\begin{table}[!t]
\centering
\caption{Length statistics of the evaluation datasets.}
\vspace{-7pt}
\resizebox{0.9\linewidth}{!}{%
\begin{tabular}{ccccc}
\toprule
\textbf{Dataset} 
& \textbf{$In_{\text{Avg}}$} 
& \textbf{$In_{\text{Med}}$} 
& \textbf{$Out_{\text{Avg}}$} 
& \textbf{$Out_{\text{Med}}$} \\
\midrule
ShareGPT~\cite{sharegpt2025} 
& 343.76 & 148.00 & 237.20 & 152.00 \\
LongBench~\cite{bai2024longbench} 
& 2686.89 & 2736.50 & 101.78 & 19.00 \\
\bottomrule
\end{tabular}
}
\label{tab:dataset-stats}
\vspace{-18pt}
\end{table}
This section studies the impact of design strategies in PipeMax via ablation experiments on RTX~5090 and L20 GPU servers with the 70B model and ShareGPT dataset.
\vspace{-5pt}
\subsubsection{Centralized Engine}
\vspace{-3pt}
\paragraph{Decode Execution-Time Estimator}
\begin{figure}[!b]
    \vspace{-20pt}
    \centering    \includegraphics[width=0.95\linewidth]{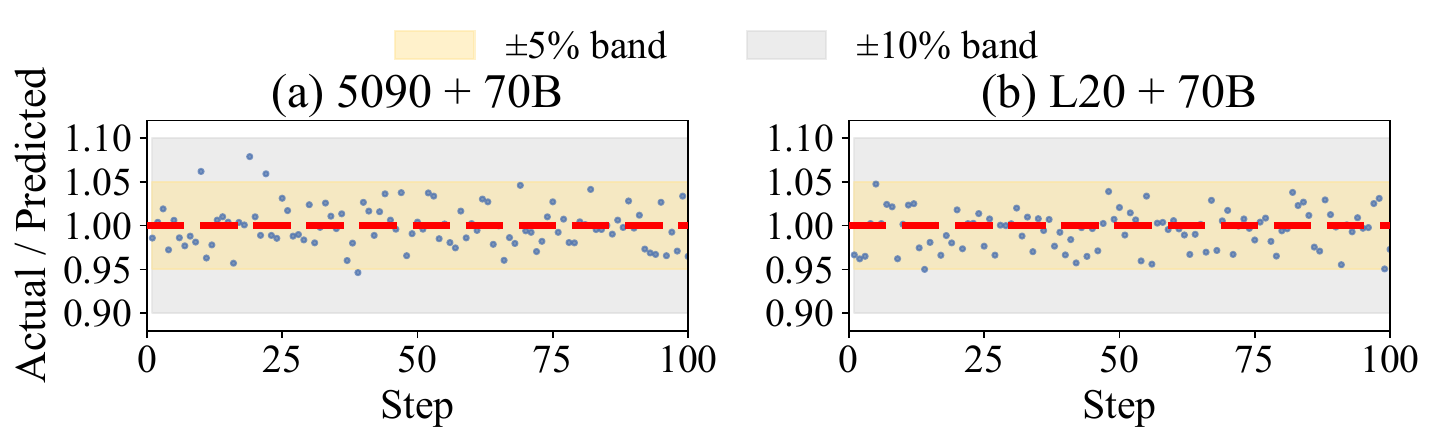}
    \vspace{-8pt}
    \caption{Ratio of actual to predicted decode execution time over 100 consecutive steps.}
    \label{fig:predict}
\end{figure}

To evaluate the accuracy of the decode execution-time estimator, we sample 100 consecutive decode steps during runtime and measure the ratio between the actual batch execution time and the predicted execution time at each step.
Figure \ref{fig:predict} reports these ratios for two representative workloads.
Across both workloads, predictions closely match actual execution times, with over 90\% of samples within 5\% error and worst-case deviation below 8\%.
This confirms the estimator’s accuracy and suitability for prefetch-aware scheduling.

\vspace{-10pt}
\paragraph{Prefetch-aware Decode Scheduler}

To evaluate the effectiveness of the prefetch-aware decode scheduler, we compare it against static prefetching baselines.
These baselines are implemented by replacing the original decode scheduler with static prefetching policies that prefetch a fixed fraction of available GPU memory, ranging from 5\% to 25\%, while keeping all other system components unchanged.
\begin{figure}[!t]
    \centering    \includegraphics[width=0.95\linewidth]{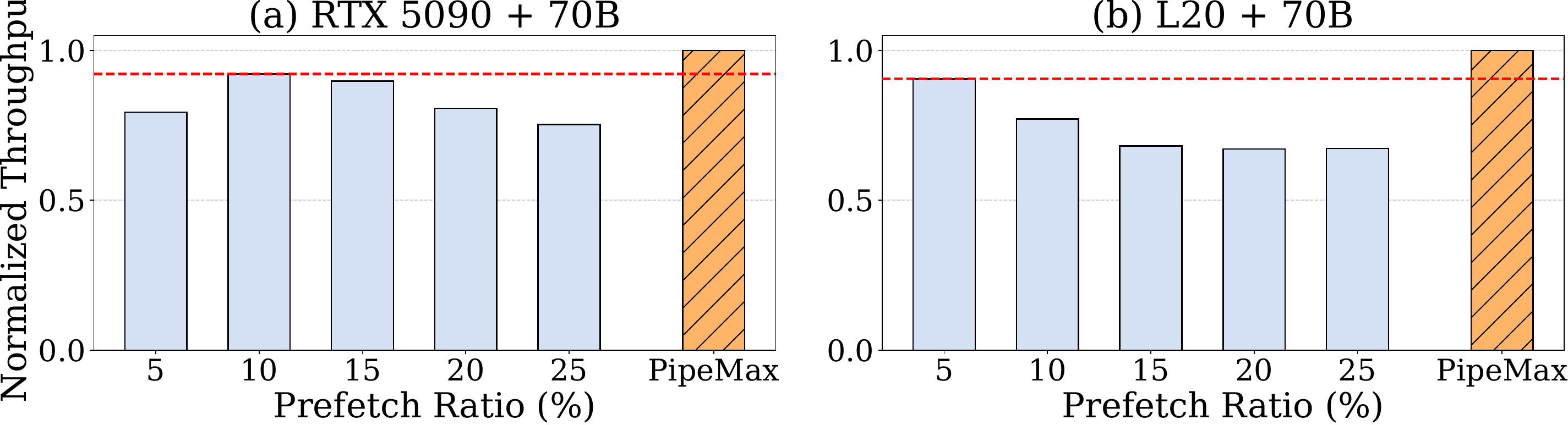}
    \vspace{-5pt}
    \caption{Normalized throughput of PipeMax vs. static prefetching with fixed prefetch ratios.}
    \label{fig:static}
    \vspace{-15pt}
\end{figure}

As shown in Fig.~\ref{fig:static}, PipeMax consistently outperforms the static prefetching baselines.
This result indicates that PipeMax can dynamically overlap KV cache prefetching with model execution, whereas static prefetching fails to fully utilize available overlap opportunities due to mismatches between prefetching decisions and actual execution progress, resulting in either insufficient prefetching or overly long prefetch operations that interrupt model execution.
\begin{figure}[!b]
    \vspace{-15pt}
    \centering    \includegraphics[width=0.9\linewidth]{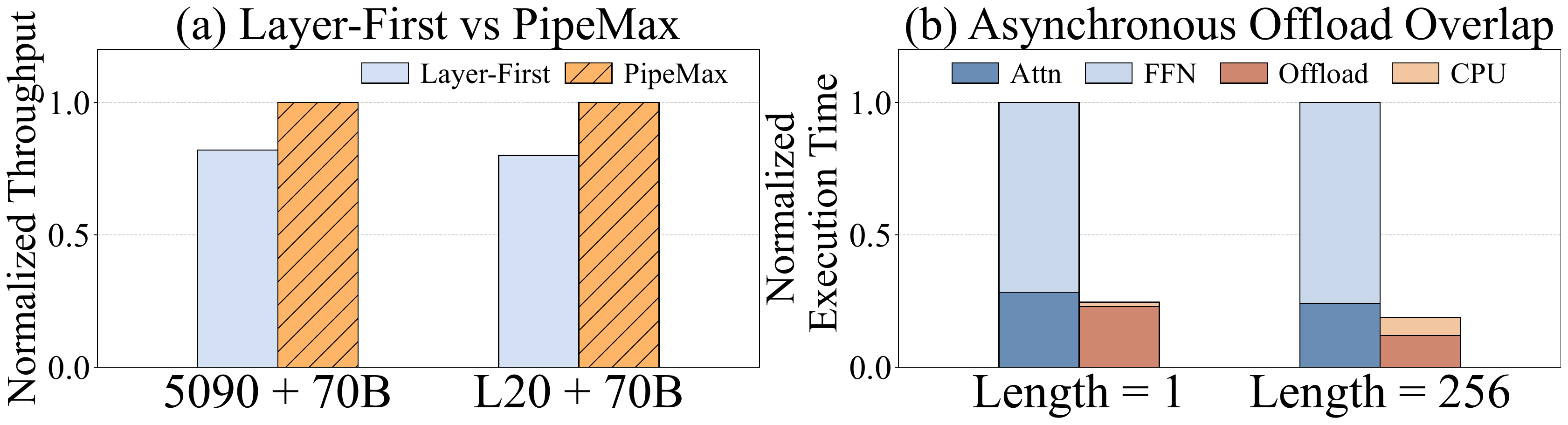}
    \vspace{-5pt}
    \caption{Runtime ablation of PipeMax: KV cache layout and offloading.}

    \label{fig:runtime}
    \vspace{-18pt}
\end{figure}

\vspace{-5pt}
\subsubsection{PipeMax Runtime}
\vspace{-3pt}

\paragraph{Block-First Layout}
We replace PipeMax’s block-first KV cache layout with a layer-first layout to evaluate the effectiveness of the block-first design.
As shown in Fig.~\ref{fig:runtime}(a), the block-first layout consistently outperforms the layer-first layout, as it enables contiguous memory allocation that improves prefetch efficiency and allows larger KV cache prefetching, leading to greater effective memory capacity.

We further measure the PCIe bandwidth utilization of both layouts.
The results show that under the block-first layout, with the block size set to the vLLM default of 16, KV cache prefetching can saturate nearly 90\% of the available PCIe bandwidth.
In contrast, the layer-first layout achieves only about 30\% bandwidth utilization.

\vspace{-10pt}
\paragraph{Asynchronous Offloading}

To validate that asynchronous offloading in Section~\ref{async-offload} can be hidden by computation, we measure the execution-time breakdown of computation and KV cache offloading.

Fig.~\ref{fig:runtime}(b) reports two representative prefill cases with input lengths of 1 and 256 tokens on RTX~5090 with the 70B model. In both cases, KV cache offloading and CPU-side processing are fully overlapped with attention and FFN computation; intermediate input lengths show similar behavior but are omitted for brevity.
During decode, different batch sizes (reflecting decode lengths) exhibit the same trend, with larger attention cost further masking offloading latency.
\vspace{-4pt}
\subsection{Runtime Dynamics during Decode}
\vspace{-4pt}
\begin{figure}[!t]
    \centering
    \includegraphics[width=0.94\linewidth]{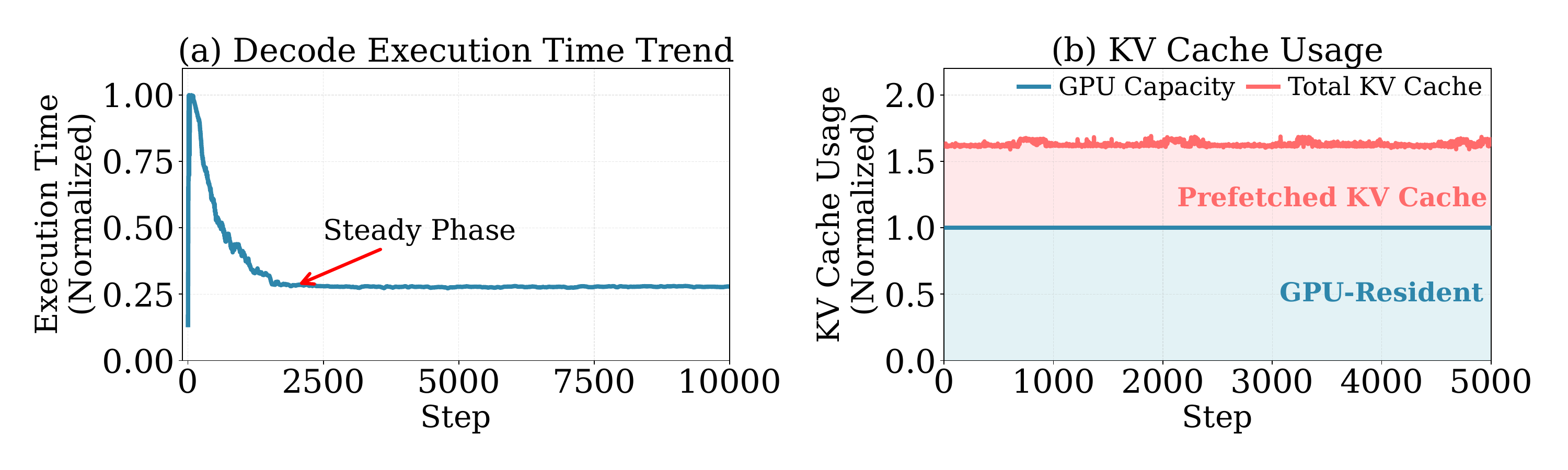}
    \vspace{-12pt}
    \caption{Decode runtime dynamics of PipeMax.}
    \label{fig:time_and_memory}
    \vspace{-17pt}
\end{figure}

In this section, we show the runtime behavior of PipeMax on RTX~5090 with the 70B model using the ShareGPT dataset.

Fig.~\ref{fig:time_and_memory}(a) shows that decode execution time increases rapidly after the prefill-to-decode transition due to short-request prefetching, then converges as GPU memory and PCIe bandwidth become limiting factors (Section~\ref{scheduler}).
In the steady phase, PipeMax balances batch workloads to prevent inter-batch imbalance.
Fig.~\ref{fig:time_and_memory}(b) shows the KV cache footprint per batch in this phase.
With total GPU memory normalized to 1.0, prefetched KV cache occupies a substantial fraction of GPU memory, indicating PipeMax effectively expands usable KV cache capacity via prefetching.

\vspace{-10pt}
\section{Related Work}
\textbf{LLM Inference.}
\vspace{-1pt}
LLM inference has attracted growing attention, motivating extensive system-level optimizations.
Orca~\cite{280922} introduces continuous batching, while vLLM~\cite{kwon2023efficient} proposes PagedAttention for efficient KV cache management.
Subsequent work optimizes LLM inference for both online and offline scenarios.
For online serving, systems such as DistServe~\cite{zhong2024distserve}, Sarathi-Serve~\cite{agrawal2023sarathi}, EcoServe~\cite{du2025ecoserve}, and Bullet~\cite{lin2025bullet} mitigate prefill--decode interference to improve service quality.
For offline inference, beyond the parallelism and offloading techniques studied here, BatchLLM~\cite{zheng2024batchllm} and BlendServe~\cite{zhao2024blendserve} improve throughput via prefix sharing.
Prefix sharing is orthogonal to PipeMax and can be seamlessly combined with our approach.

\textbf{Pipeline Parallelism Enhanced by Offloading.} 
Pipeline parallelism and offloading both improve resource utilization via concurrent batch execution, making their integration natural.
Prior work explored this combination for training, including Mobius~\cite{feng2023mobius}, APT-MoE~\cite{wei2024aptmoe}, and PipeOffload~\cite{wan2025pipeoffload}, which overlap pipeline execution with data transfers via stage, expert, or activation offloading.

\vspace{-10pt}
\section{Conclusion}
\vspace{-4pt}
This paper presents PipeMax, a high-throughput LLM inference system for commodity GPU servers.
PipeMax boosts pipeline-parallel inference by offloading inactive KV cache and dynamically scheduling computation and KV cache movement to maximize compute–data overlap.
Experiments show that PipeMax outperforms state-of-the-art LLM systems by up to 2.51×, 1.42×, and 1.38× on 8 GPUs.

\section*{Impact Statement}
This paper presents work whose goal is to advance the field of Large Language Model Inference Systems. There are many potential societal consequences of our work, none of which we feel must be specifically highlighted here.
\balance
\bibliography{example_paper}
\bibliographystyle{paper}

\clearpage
\makeatletter
\let\balance\relax
\let\endbalance\relax
\let\flushend\relax
\let\endflushend\relax
\makeatother

\onecolumn
\appendix
\newpage
\appendix
\section{Appendix}
\subsection{Pipeline Parallelism for the Prefill Stage}
\label{app:prefill-proof}

In this section, we present a proof for the total execution-time formula of the pure prefill pipeline shown in Fig.~\ref{fig:pure-prefill} (also Stated in Section~\ref{pp-analyse-prefill}).

\paragraph{Setup.}
Consider $m$ independent prefill requests executed on an $n$-stage pipeline (i.e., $n$ workers/GPUs).
Let $t_i$ denote the per-stage execution time of request $i$ during prefill, 
\emph{assuming a homogeneous pipeline where all stages have identical execution time for a given request}.
\footnote{Prefill has no cross-request data dependency; thus, the first stage can be kept continuously busy by scheduling ready requests.}
Let $T(m,n)$ denote the total makespan to process all $m$ requests on the $n$-stage pipeline.

\paragraph{Assumption (No bubble at Stage 1).}
We assume the first stage is fully utilized (i.e., no idle bubble at Stage~1). This can be Ensured by continuously dispatching ready requests whose dependencies (if any) have been resolved.

\begin{theorem}[Prefill pipeline execution time]
\label{thm:prefill-time}
Under the above assumption, the total execution time satisfies
\begin{equation}
T(m,n)
=
\sum_{i=1}^{m} t_i
+
(n-1)\cdot \max_{1\le i\le m} t_i.
\end{equation}
\end{theorem}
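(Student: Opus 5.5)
The plan is to treat the pipeline as a permutation flow shop in which request $i$ takes the same time $t_i$ on every stage, and to compute the makespan exactly with the standard critical-path argument.

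\textbf{Modeling step.} Index requests in the order the first stage processes them. Let $C_{i,k}$ be the completion time of request $i$ on stage $k$. Under the no-bubble assumption at Stage~1, and because each later stage starts a request as soon as the request has left the previous stage and the stage itself is free, we have
\begin{equation}
C_{i,k} = \max\bigl(C_{i-1,k},\, C_{i,k-1}\bigr) + t_i ,
\end{equation}
with $C_{0,k}=C_{i,0}=0$. The makespan is $T(m,n)=C_{m,n}$. This step uses two further modeling assumptions: activation buffers between stages are unbounded, so no stage is ever blocked, and activation transfer time is negligible. I would state both explicitly.

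\textbf{Path formula.} Unrolling the recursion by induction on $i+k$ gives
\begin{equation}
C_{m,n} = \max_{P} \sum_{(i,k)\in P} t_i ,
\end{equation}
where $P$ ranges over monotone lattice paths from cell $(1,1)$ to cell $(m,n)$ that step either to $(i+1,k)$ or to $(i,k+1)$. Each such path contains exactly $m+n-1$ cells. Its projection onto the request index is surjective, so it visits every request at least once. The remaining $n-1$ cells are repeats of some requests (the vertical moves), and each repeat contributes at most $\max_i t_i$.

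\textbf{Upper and lower bounds.} The path description gives the upper bound: every path sum is at most $\sum_i t_i + (n-1)\max_i t_i$. For the matching lower bound, let $i^\ast=\arg\max_i t_i$. Take the path that runs along stage $1$ from request $1$ to request $i^\ast$, then descends through all $n$ stages at request $i^\ast$, then runs along stage $n$ to request $m$. This path attains exactly $\sum_i t_i+(n-1)t_{i^\ast}$, so equality holds.

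I expect the main obstacle to be justifying the recursion itself, not the combinatorics. The recursion requires that no stage ever idles while a request is waiting for it, which is the work-conserving and unbounded-buffer property. It also requires that the arrival order at Stage~1 is preserved at every later stage, which holds because each stage uses FIFO processing. If this justification becomes delicate, a fallback is a direct inductive proof on $m$ that never mentions paths. In that proof I would show that $C_{m,n}=\max\bigl(C_{m-1,n},\,C_{m,1}+(n-1)t_m\bigr)+\dots$, and track that $C_{m-1,n}-C_{m-1,1}=(n-1)\max_{i<m}t_i$. Then adding request $m$ either increases the running maximum or simply shifts the schedule by $t_m$.
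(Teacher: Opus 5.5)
Your proof is correct, and it takes a different route from the paper.

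\textbf{Common starting point.} Both arguments rest on the same recurrence. Your $C_{i,k}=\max(C_{i-1,k},C_{i,k-1})+t_i$ is the paper's \eqref{eq:prefill-recurrence}, with its $T(k,n)$ playing the role of your $C_{k,n}$. The paper reads this recurrence off the precedence constraints in Fig.~\ref{fig:prefill-dependencies}. It leaves implicit the FIFO, unbounded-buffer and zero-transfer-time assumptions that you state explicitly.

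\textbf{The paper's route.} The paper never unrolls the recurrence. Instead it verifies the guessed closed form $f(m,n)$ by a two-dimensional induction that enlarges the request--stage grid one row or one column at a time. Each step splits on whether the new request's time is at most, or exceeds, the running maximum $M_k=\max_{i\le k}t_i$. That case split decides which argument of the $\max$ is active. The half that adds a stage is argued only by analogy with the half that adds a request.

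\textbf{Your route.} Your max-plus unrolling into monotone lattice paths replaces that case analysis with one counting bound plus one explicit witness path. It derives the formula rather than checking a guessed one. It also exhibits the critical path: run Stage~1 up to the longest request, drain that request through all $n$ stages, then finish along Stage~$n$. This is exactly the picture in Fig.~\ref{fig:pure-prefill}. The path characterization also remains valid for stage-dependent times $t_{i,k}$. There the paper's closed form no longer applies, but upper and lower bounds follow immediately.

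\textbf{Trade-off.} The paper's route needs no path lemma and stays purely algebraic.

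\textbf{Minor point on your fallback.} The two-term reduction $C_{m,n}=\max(C_{m-1,n}+t_m,\,C_{m,1}+(n-1)t_m)$ holds only because $C_{m-1,k}$ is affine in $k$. So it needs the induction hypothesis at every stage $k$, not only at $k=n$. Your main argument does not depend on this.
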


\begin{proof}
Define
\begin{equation}
f(m,n) \triangleq \sum_{i=1}^{m} t_i + (n-1)\cdot \max_{1\le i\le m} t_i.
\end{equation}
Our goal is to show that the total execution time satisfies
\begin{equation}
T(m,n) = f(m,n),
\end{equation}
for all $m>0$ and $n>0$.

We prove the theorem by induction.

\paragraph{Base cases.}
When $n=1$, the pipeline degenerates to sequential execution:
\begin{equation}
T(m,1)=\sum_{i=1}^{m} t_i = f(m,1).
\end{equation}

When $m=1$, the single request traverses all $n$ stages:
\begin{equation}
T(1,n)=n\cdot t_1 = f(1,n).
\end{equation}

\paragraph{Inductive step.}
Given the base cases, we assume that the equality
\begin{equation}
T(i,j)=f(i,j), \qquad \forall\, 1 \le i \le m,\; 1 \le j \le n.
\label{eq:inductive-hypothesis}
\end{equation}

It suffices to show that the equality is preserved when extending the
pipeline by one stage or one request, i.e.,
\begin{equation}
T(m,n+1)=f(m,n+1)
\quad\text{and}\quad
T(m+1,n)=f(m+1,n).
\end{equation}

\textbf{We first prove that $T(m,n+1)=f(m,n+1)$.}

Fix the number of pipeline stages to $(n+1)$ and consider increasing the
number of requests.
The equality holds for a single request, since
\begin{equation}
T(1,n+1) = (n+1)\cdot t_1 = f(1,n+1).
\end{equation}
It therefore suffices to show that, for any $k\in[1,m-1]$,
\begin{equation}
T(k,n+1)=f(k,n+1)
\;\Longrightarrow\;
T(k+1,n+1)=f(k+1,n+1).
\label{eq: case1}
\end{equation}
Once this implication is established, $T(m,n+1)=f(m,n+1)$ follows directly.

Consider the pipeline dependency for processing the $(k+1)$-th request on an $(n+1)$-stage pipeline.
Let $T(k,n+1)$ denote the time when the $k$-th request finishes stage $(n+1)$, and $T(k+1,n)$ denote the time when request $(k+1)$ finishes stage $n$.
Due to pipeline precedence constraints, the execution of request $(k\!+\!1)$ at stage $(n\!+\!1)$ is governed by two dependencies, as illustrated in Fig.~\ref{fig:prefill-dependencies}(a) and (b).

\begin{figure}[H]
    \centering
    \begin{minipage}{0.48\linewidth}
        \centering
        \includegraphics[width=\linewidth]{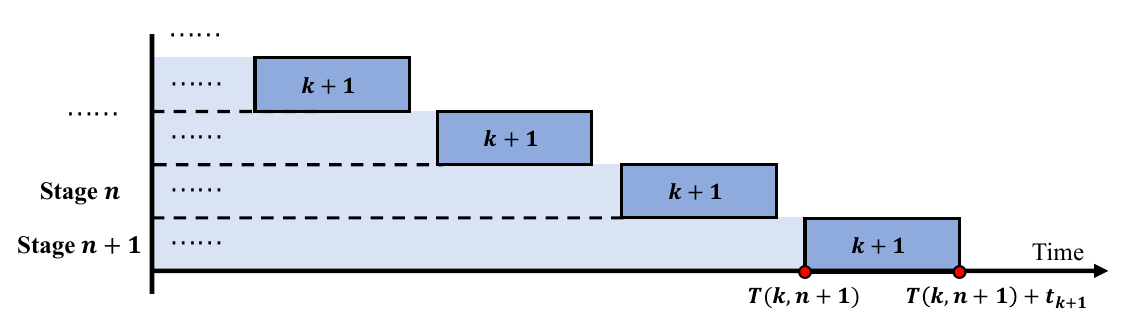}
        \vspace{2pt}
        \textbf{(a) Case 1: request-order dependency.}
    \end{minipage}
    \hfill
    \begin{minipage}{0.48\linewidth}
        \centering
        \includegraphics[width=\linewidth]{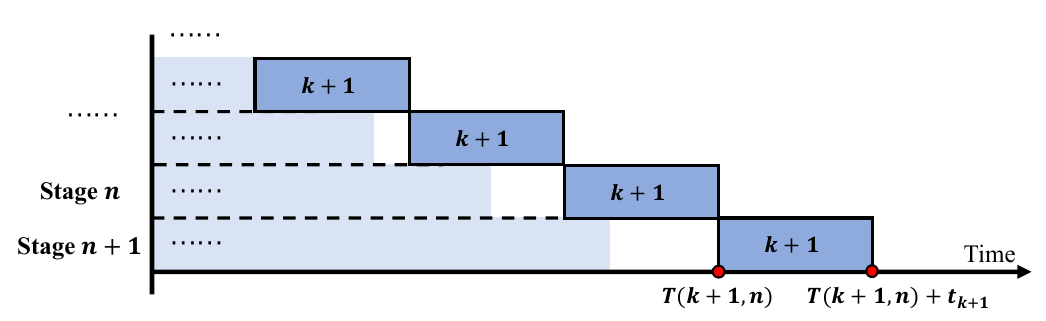}
        \vspace{2pt}
        \textbf{(b) Case 2: stage-order dependency.}
    \end{minipage}
    \caption{Precedence constraints when extending the pipeline by one
    additional request, which determine the start time of request $(k+1)$ at
    stage $(n+1)$.}
    \label{fig:prefill-dependencies}
\end{figure}

Combining the above two cases, the completion time of request $(k\!+\!1)$ at stage $(n\!+\!1)$ is determined by the later of the two dependencies, and thus satisfies
\begin{equation}
T(k\!+\!1,n\!+\!1)
= \max\!\bigl\{T(k,n\!+\!1),\, T(k\!+\!1,n)\bigr\} + t_{k+1}.
\label{eq:prefill-recurrence}
\end{equation}

Define $M_k \triangleq \max_{1\le i\le k} t_i$.

\textbf{Case 1: $t_{k+1}\le M_k$.}

In this case, $M_{k+1}=M_k$.
Using the induction hypothesis,
\begin{align}
T(k,n+1) &= \sum_{i=1}^{k} t_i + n\cdot M_k, \\
T(k+1,n)
&= \sum_{i=1}^{k+1} t_i + (n-1)\cdot M_{k+1} = \sum_{i=1}^{k+1} t_i + (n-1)\cdot M_k.
\end{align}
Thus,
\begin{equation}
\begin{aligned}
T(k,n+1) - T(k+1,n)
&= nM_k - \bigl(t_{k+1} + (n-1)M_k\bigr) \\
&= M_k - t_{k+1} \\
&\ge 0.
\end{aligned}
\end{equation}

so $\max\{T(k,n+1),T(k+1,n)\}=T(k,n+1)$.

Plugging into \eqref{eq:prefill-recurrence},
\begin{align}
T(k+1,n+1)
&= T(k,n+1) + t_{k+1} \nonumber\\
&= \sum_{i=1}^{k+1} t_i + n\cdot M_{k+1} \nonumber\\
&= f(k+1,n+1).
\end{align}

\textbf{Case 2: $t_{k+1} > M_k$.}

In this case, $M_{k+1}=t_{k+1}$.
Using the induction hypothesis,
\begin{align}
T(k,n+1) &= \sum_{i=1}^{k} t_i + n\cdot M_k, \\
T(k+1,n)
&= \sum_{i=1}^{k+1} t_i + (n-1)\cdot M_{k+1}
 = \sum_{i=1}^{k+1} t_i + (n-1)\cdot t_{k+1}.
\end{align}
Thus,
\begin{equation}
\begin{aligned}
T(k+1,n) - T(k,n+1)
&= \Bigl(\sum_{i=1}^{k+1} t_i + (n-1)t_{k+1}\Bigr)
 - \Bigl(\sum_{i=1}^{k} t_i + nM_k\Bigr) \\
&= nt_{k+1} - nM_k \\
&= n(t_{k+1}-M_k) \\
&> 0.
\end{aligned}
\end{equation}
Therefore, $\max\{T(k,n+1),T(k+1,n)\}=T(k+1,n)$.
Plugging into \eqref{eq:prefill-recurrence},
\begin{align}
T(k+1,n+1)
&= T(k+1,n) + t_{k+1} \nonumber\\
&= \sum_{i=1}^{k+1} t_i + n\cdot t_{k+1} \nonumber\\
&= \sum_{i=1}^{k+1} t_i + n\cdot M_{k+1} \nonumber\\
&= f(k+1,n+1).
\end{align}

In both cases, the implication in \eqref{eq: case1} holds, and therefore
$T(m,n+1)=f(m,n+1)$ follows.


\textbf{We next prove that $T(m+1,n)=f(m+1,n)$.}

Similar to the previous case, we fix the number of requests to $(m+1)$ and consider increasing the number of pipeline stages.

The equality holds for a single stage, since
\begin{equation}
T(m+1,1)=\sum_{i=1}^{m+1} t_i = f(m+1,1).
\end{equation}
It therefore suffices to show that, for any $k\in[1,n-1]$,
\begin{equation}
T(m+1,k)=f(m+1,k)
\;\Longrightarrow\;
T(m+1,k+1)=f(m+1,k+1).
\label{eq:case2}
\end{equation}

Consider processing request $(m+1)$ on a $(k+1)$-stage pipeline.
Let $T(m,k+1)$ denote the time when request $m$ finishes stage $(k+1)$,
and let $T(m+1,k)$ denote the time when request $(m+1)$ finishes stage $k$.
By pipeline precedence constraints (illustrated in Fig.~\ref{fig:prefill-dependencies-v2}),
request $(m+1)$ can start stage $(k+1)$ only after both dependencies are resolved.
Therefore, its completion time satisfies
\begin{equation}
T(m+1,k+1)=\max\!\bigl\{T(m,k+1),\,T(m+1,k)\bigr\}+t_{m+1}.
\label{eq:prefill-recurrence-v2}
\end{equation}


\begin{figure}[H]
    \centering
    \begin{minipage}{0.48\linewidth}
        \centering
        \includegraphics[width=\linewidth]{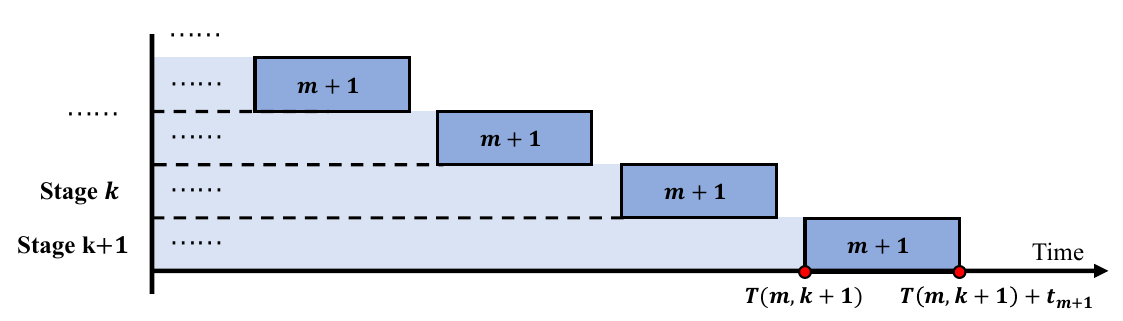}
        \vspace{2pt}
        \textbf{(a) Case 1: request-order dependency.}
    \end{minipage}
    \hfill
    \begin{minipage}{0.48\linewidth}
        \centering
        \includegraphics[width=\linewidth]{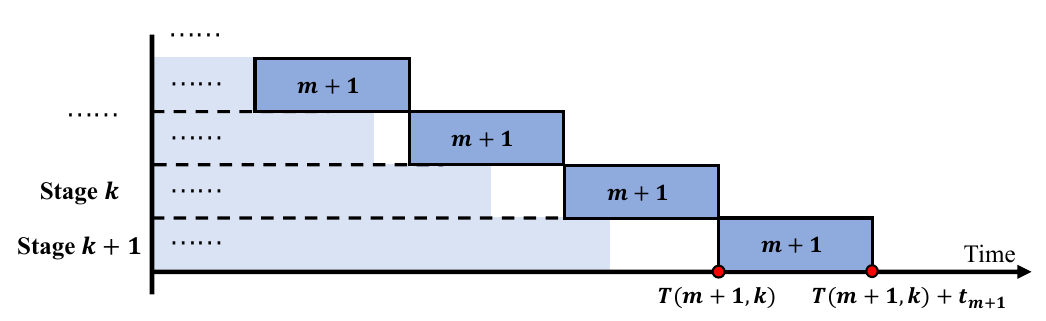}
        \vspace{2pt}
        \textbf{(b) Case 2: stage-order dependency.}
    \end{minipage}
    \caption{Precedence constraints for request $(m+1)$ when extending the pipeline by one additional stage, which determine its start time at stage $(k+1)$.}

    \label{fig:prefill-dependencies-v2}
\end{figure}

The dependency constraints in \eqref{eq:prefill-recurrence-v2} are structurally
identical to those in \eqref{eq:prefill-recurrence}, with the roles of requests
and pipeline stages exchanged.
Accordingly, the same case analysis applies here.
By distinguishing whether $t_{m+1}$ is no larger than or exceeds
$\max_{1\le i\le m} t_i$, we can show that

\begin{equation}
T(m+1,k+1)=f(m+1,k+1).
\end{equation}
This establishes the implication in \eqref{eq:case2}, and therefore
$T(m+1,n)=f(m+1,n)$ holds.

\textbf{Conclusion}

Since the equality holds for the base cases and is preserved when extending either the number of requests or the number of pipeline stages, we conclude that $T(m,n)=f(m,n)$ holds for all $m>0$ and $n>0$.

\end{proof}

\paragraph{Implication.}
The theorem implies that, once the first stage is kept bubble-free, the overall prefill makespan is dominated by
(i) the cumulative work injected into Stage~1 and
(ii) a fixed drain cost of $(n-1)\max_i t_i$.
When $m\gg n$, the drain term becomes amortized, and the total time is effectively governed by Stage~1 throughput.
Therefore, optimizing prefill reduces to keeping Stage~1 continuously saturated, which naturally connects to our batch construction and KV-cache memory management design.

\subsection{Pipeline Parallelism for the Decode Stage}
\label{app:decode-proof}

In this section, we derive the average token budget per decode batch under pipeline parallelism.

Let $M$ denote the per-GPU memory capacity, $W$ the total model weight size,
$n$ the pipeline degree, and $T$ the KV Cache size per token.

Since the model weights are evenly partitioned across pipeline stages, the
per-GPU weight footprint is
\begin{equation}
W_{\mathrm{gpu}} = \frac{W}{n}.
\end{equation}
Accordingly, the memory available for KV Cache on each GPU is
\begin{equation}
M_{\mathrm{kv}} = M - W_{\mathrm{gpu}}.
\end{equation}

Since the KV Cache of each token is sharded across all $n$ GPUs, the
per-token KV Cache footprint on each GPU is
\begin{equation}
T_{\mathrm{gpu}} = \frac{T}{n}.
\end{equation}

Maintaining full pipeline utilization during decode requires $n$ concurrent
batches to remain resident in GPU memory.
Let $N_{\text{token}}^{\max}$ denote the system-wide maximum number of
storable tokens, which is given by
\begin{equation}
\begin{aligned}
N_{\text{token}}^{\max}
&= \frac{M_{\mathrm{kv}}}{T_{\mathrm{gpu}}} \\
&= \frac{M - W/n}{T/n} \\
&= \frac{nM - W}{T}.
\end{aligned}
\end{equation}

Dividing this system-wide token capacity evenly across the $n$ resident decode
batches yields the average token budget per batch:
\begin{equation}
\begin{aligned}
N_{\text{avg}} 
&= \frac{N_{\text{token}}^{\max}}{n}\\
&= \frac{M - W/n}{T}.
\end{aligned}
\end{equation}

\subsection{Scheduler Algorithm}
\label{app:scheduler}

Here we present the detailed scheduling procedure and the greedy algorithm for selecting $P_t$.

\subsubsection{Scheduling Procedure}

The scheduling procedure described in Section~\ref{scheduler} is summarized in Algorithm~\ref{alg:decode-scheduler}.
\begin{algorithm}[H]
\caption{Prefetch-Aware Decode Scheduler}
\label{alg:decode-scheduler}
\begin{algorithmic}[1]
\Require Decode request set $\mathcal{R}$, pipeline depth $n$
\Ensure Iterative decode batches $\mathcal{D} = \{D_0,\dots,D_{n-1}\}$

\State \textbf{Initialization:}
\State Partition $\mathcal{R}$ into $n$ initial decode batches $\mathcal{D}$
\State Initialize iteration counter $t \gets 0$
\State Initialize steady-phase flag $\mathsf{steady} \gets \textbf{false}$

\State \textbf{Iterative Scheduling:}
\While{decode not finished}
    \State $i_t \gets t \bmod n$
    \State $j_t \gets (i_t + 1) \bmod n$
    \State $k_t \gets (i_t - 1) \bmod n$
    
    \State Predict the execution time $\hat{T}_t$ of batch $D_{i_t}$
    \State Derive the prefetch budget $\mathcal{B}_t \gets B \cdot \hat{T}_t$
    
    \State Update steady-phase flag $\mathsf{steady}$ by monitoring whether $\mathcal{B}_t$ has stabilized
    
    \State Retain GPU-resident requests in the next batch:
    \State \hspace{1em}$D_{j_t}^{\mathrm{res}} \gets D_{j_t} \cap \mathcal{R}_t$
    
    \If{\textbf{not} $\mathsf{steady}$}
        \State \textit{// Ramp-up phase: prioritize short requests to fill the budget}
        \State \hspace{1em}$P_t \gets \textsc{ShortFirstFill}(\mathcal{R}_t^{\mathrm{cpu}}, \mathcal{B}_t)$
    \Else
        \State \textit{// Steady phase: match execution-time gap}
        \State \hspace{1em}$P_t \gets \textsc{GreedySelect}(\mathcal{R}_t^{\mathrm{cpu}}, \mathcal{B}_t, \Delta \hat{T}_t)$
    \EndIf
    
    \State Prefetch the KV cache of requests in $P_t$ from CPU to GPU
    \State Reclaim GPU memory by overwriting KV cache blocks of the inactive batch $D_{k_t}$
    
    \State Update the next decode batch:
    \State \hspace{1em}$D_{j_t} \gets D_{j_t}^{\mathrm{res}} \cup P_t$
    
    \State Submit batch $D_{i_t}$ for execution and batch $D_{j_t}$ for prefetching to the PipeMax runtime
    \State $t \gets t + 1$
\EndWhile
\end{algorithmic}
\end{algorithm}

The scheduler operates in an iterative manner and adapts its behavior across iterations based on the evolution of the prefetch budget $\mathcal{B}_t$.

At each iteration, PipeMax first predicts the execution time of the currently executing batch and derives the corresponding prefetch budget.
Requests whose KV cache already resides in GPU memory are retained in the next batch, while additional CPU-resident requests are selected for prefetching to fully utilize the available budget.

PipeMax distinguishes the warm-up phase from the steady phase by tracking the stabilization of the prefetch budget.
During the warm-up phase, the scheduler prioritizes short requests with smaller prefix lengths to best-effort utilize the limited prefetch budget and rapidly increase decode execution time.
As longer requests are gradually admitted, execution time and the prefetch budget may exhibit temporary fluctuations due to GPU memory and CPU--GPU bandwidth constraints.
Eventually, the system converges to a bounded execution regime, after which PipeMax enters the steady phase.

Once the system reaches a steady phase, the scheduler switches to a prefetch-aware selection policy that matches the execution-time contribution of prefetched requests to the remaining time gap, thereby balancing decode batches and avoiding inter-batch imbalance.

In practice, PipeMax detects the onset of the steady phase by tracking the evolution of the prefetch budget $\mathcal{B}_t$ over a sliding window of recent iterations.
At the beginning of decoding, $\mathcal{B}_t$ typically grows rapidly as execution time ramps up.
As the system transitions into the steady phase, $\mathcal{B}_t$ stabilizes and fluctuates within a narrow range.
Specifically, PipeMax maintains a sliding window of the most recent $w$ iterations and considers the system to have entered the steady phase when the relative variation of $\mathcal{B}_t$ within the window falls below a predefined threshold.

\subsubsection{Greedy Algorithm for Selecting $P_t$}

After PipeMax reaches a steady phase, it strives to maintain stable execution times across iterations to reduce inter-batch imbalance.

\paragraph{Problem Formulation.}
At iteration $t$, PipeMax selects a set of CPU-resident requests $P_t$ to augment the next decode batch $D_{j_t}$.
Each request $r$ is associated with a prefix length $L_r$ and contributes an execution-time cost $\alpha + \beta L_r$ according to Eq.~(\ref{model}).
Given the prefetch budget $\mathcal{B}_t = B \cdot \hat{T}_t$ and the remaining execution-time gap $\Delta \hat{T}_t$ defined in Eq.~(\ref{cpu-time}), the goal is to select a subset $P_t$ such that
\[
\sum_{r \in P_t} L_r \approx \mathcal{B}_t,
\]
i.e., the selected requests aim to nearly saturate the prefetch budget without exceeding it,
while making the total execution-time contribution
\[
\sum_{r \in P_t} (\alpha + \beta L_r)
\]
as close as possible to $\Delta \hat{T}_t$.

\paragraph{Greedy Algorithm}

To select the set of CPU-resident requests $P_t$, PipeMax adopts a two-stage heuristic that combines greedy selection with exchange-based local refinement.

The greedy stage prioritizes length utilization by selecting requests in a length-first manner, rapidly saturating the prefetch budget to obtain a near-feasible initial solution.
Under the execution-time model, this effectively drives the length-dependent term $\beta \sum_{r \in P_t} L_r$ toward its budget-limited maximum.

Building on this initial solution, the refinement stage performs limited exchange operations that replace a small number of selected requests with unselected ones, adjusting the batch cardinality while preserving the budget constraint.
Under the length budget, the cumulative length term $\beta \sum_{r \in P_t} L_r$ remains largely fixed after greedy initialization.
As a result, the refinement primarily adjusts the constant per-request component $\alpha |P_t|$ of the execution-time model.
These local exchanges are guided by the execution-time model and aim to minimize the mismatch between the modeled batch execution time
\[
\alpha |P_t| + \beta \sum_{r \in P_t} L_r
\]
and the remaining execution-time gap $\Delta \hat{T}_t$.

Concretely, when the modeled execution time falls short of the target, the refinement replaces longer requests with multiple shorter ones of comparable total length, increasing the $\alpha |P_t|$ term.
Conversely, when the modeled execution time exceeds the target, multiple shorter requests are replaced by a longer one to reduce the $\alpha |P_t|$ contribution, while keeping the total prefix length approximately unchanged.
This exchange process continues until the mismatch falls below a predefined threshold or a maximum number of refinement steps is reached.

\begin{algorithm}[H]
\caption{Greedy Selection with Local Refinement}
\label{alg:greedy-refine}
\begin{algorithmic}[1]
\Require CPU-resident requests $\mathcal{R}_t^{\mathrm{cpu}}$ with prefix length $L_r$
\Require Prefetch budget $\mathcal{B}_t$, target gap $\Delta \hat{T}_t$
\Ensure Selected subset $P_t$

\State \textbf{Greedy Initialization:}
\State Sort $\mathcal{R}_t^{\mathrm{cpu}}$ in descending order of $L_r$
\State $P_t \gets \emptyset$, $S \gets 0$ \Comment{$S$: cumulative prefix length}
\For{each request $r \in \mathcal{R}_t^{\mathrm{cpu}}$}
    \If{$S + L_r \le \mathcal{B}_t$}
        \State $P_t \gets P_t \cup \{r\}$
        \State $S \gets S + L_r$
    \EndIf
\EndFor

\State \textbf{Local Refinement:}
\For{a fixed number of refinement steps}
    \State $T \gets \alpha |P_t| + \beta S$
    \If{$|T - \Delta \hat{T}_t|$ is below a threshold}
        \State \textbf{break}
    \EndIf
    \If{$T < \Delta \hat{T}_t$}
        \State Replace one long selected request with multiple shorter unselected ones
        \State to increase $|P_t|$ while keeping $S \approx \mathcal{B}_t$
    \ElsIf{$T > \Delta \hat{T}_t$}
        \State Replace multiple short selected requests with one longer unselected one
        \State to decrease $|P_t|$ while keeping $S \approx \mathcal{B}_t$
    \EndIf

\EndFor

\State \Return $P_t$
\end{algorithmic}
\end{algorithm}

\end{document}